\algrenewcommand\algorithmicforall{\textbf{foreach}}
\algrenewcommand\algorithmicindent{.8em}
\newcommand{\doublewidetilde}[1]{{%
		\mathpalette\double@widetilde{#1}}}
\newcommand{\double@widetilde}[2]{%
		\sbox\z@{$\m@th#1\widetilde{#2}$}%
		\ht\z@=.5\ht\z@
		\widetilde{\box\z@}}
\newtheorem{lemma}{Lemma}
\begin{document}

\title{\huge Fast Beam Placement for Ultra-Dense LEO Networks}

\author{Trinh Van Chien, Nguyen Minh Quan,  Tri Nhu Do, Cuong Le, Tan N.  Nguyen, and Symeon Chatzinotas \vspace{-0.5cm}

\thanks{T. V. Chien and N. M. Quan are with the School of Information and Communication Technology, Hanoi University of Science and Technology, Hanoi 100000, Vietnam (e-mail: chientv@soict.hust.edu.vn and quan.nm200508@sis.hust.edu.vn).  T. N. Do is with the Department of Electrical Engineering, Polytechnique Montreal, Montreal, QC H3T 1J4, Canada (e-mail: tri-nhu.do@polymtl.ca). Tan N. Nguyen is with the Communication and Signal Processing Research Group, Faculty of Electrical and Electronics Engineering, Ton Duc Thang University, Ho Chi Minh City, Vietnam (e-mail: nguyennhattan@tdtu.edu.vn). C. Le and S. Chatzinotas are with the with the Interdisciplinary Centre for Security, Reliability and Trust (SnT), University of Luxembourg, 1855 Luxembourg, Luxembourg (e-mail: vancuong.le@uni.lu and symeon.chatzinotas@uni.lu). The work of S. Chatzinotas has received funding from the European Union's HORIZON.1.2 - Marie Skłodowska-Curie Actions (MSCA) programme under grant agreement ID 101131481. Corresponding author: Tan N. Nguyen.}
}



\maketitle

\begin{abstract} Low Earth orbit (LEO) satellites has brought about significant improvements in wireless communications, characterized by low latency and reduced transmission loss compared to geostationary orbit (GSO) satellites. Ultra-dense LEO satellites can serve many users by generating active beams effective to their locations. The beam placement problem is challenging but important for efficiently allocating resources with a large number of users. This paper formulates and solves a fast beam placement optimization problem for ultra-dense satellite systems to enhance the link budget with a minimum number of active beams (NABs). To achieve this goal and balance load among beams within polynomial time, we propose two algorithms for large user groups exploiting the modified K-means clustering and the graph theory. Numerical results illustrate the effectiveness of the proposals in terms of the statistical channel gain-to-noise ratio and computation time over state-of-the-art benchmarks.
\end{abstract}

\begin{IEEEkeywords}
 Beam placement, clustering, minimum clique cover, half power beam width.
\end{IEEEkeywords}

\section{Introduction}
Non-geostationary orbit (Non-GSO) satellites such as low Earth orbit (LEO) have boosted the system performance by offering non-terrestrial connectivity solutions and supporting diverse digital technologies  \cite{9741772}. These satellites have the potential to provide low latency, enhance communication speeds, and improve energy efficiency. Accordingly, there are plans to launch tens of thousands of Non-GSO satellites to establish ultra-dense mega-constellations \cite{9473743}. This will result in an opportunity to cover the Earth, characterized by a variety of frequencies and directions. Besides, resource allocation is a critical issue in satellite communications. There are remaining challenges in enhancing the efficiency of limited resources and increasing satellite throughput due to the constraints on the limited budgets. The primary objectives involve minimizing costs, reducing computation time, and controlling congestion \cite{10182679}. An optimization problem may entail a vast number of integer variables within a discrete search space, thereby resulting in its worst-case complexity being NP-hard \cite{10048927}. In this context, we address a beam placement problem, where conical beams are defined by their half-power beam width (HPBW). Our aim is to identify suitable locations within the coverage area to position these beams, optimizing their arrangement to maximize the number of users per satellite beam and determining the optimal beam center.

Recent studies have explored various approaches to  the beam placement for LEO constellations. The authors in \cite{10366793} utilized the gradient ascent and quadratic transform for an iterative-based approach to determine the beam center and transmit power. Pachler \textit{et al.} employed the graph theory and a heuristic algorithm to place the beams \cite{pachler2021static}. However, these works have not specified the active beam number or addressed the user distribution disparity among beams. This can lead to load imbalance, reducing communication efficiency. Some proposed approaches to tackle the load balancing problem include linear optimization techniques, self-organizing feature maps, and problem linearizations. Nevertheless, these methods have not classified users into individual beams and specified the minimal number of active beams. The recently proposed methods including the deterministic annealing in \cite{10118814} and the two-stage algorithm based K-means clustering in \cite{bui2022} can ensure balanced traffic load among employed beams. However, they only considered a small number of users with the  time consumption being inadequate for practical applications.

This work suggests strategies to minimize the number of active beams and improve the load balance in the network serving many users. We address the optimization process of reducing the distance between users and their corresponding serving beams. This particular topic has not been thoroughly investigated in this literature  because of the non-convex and NP-hard properties.  We concentrate on developing two algorithms to cluster users based on the Half Power Beam Width (HPBW) to obtain a solution in a polynomial time with the following key contributions: $i)$ we develop a system where multi-beams serve multi-users as well as formulating an optimization problem; $ii)$ we propose two algorithms: one utilizing loops in conjunction with K-means clustering, and another employing a two-phase algorithm based on graph theory; and $iii)$ we provide experimental results to compare the effectiveness of the two proposed algorithms and compare with the algorithms provided in the document \cite{bui2022}. Specifically, the proposed algorithms can be applied for large-scale networks with many users.

\section{Ultra-Dense Satellite Communication \& Beam Placement Optimization Problem}
This section presents a high throughput multi-beam system. A beam placement problem is then formulated and solved.
\subsection{System Model}
\begin{figure}[t]
    \centering
    \includegraphics[trim=0.5cm 0.5cm 0.5cm 0.2cm, clip=true, scale=0.15] {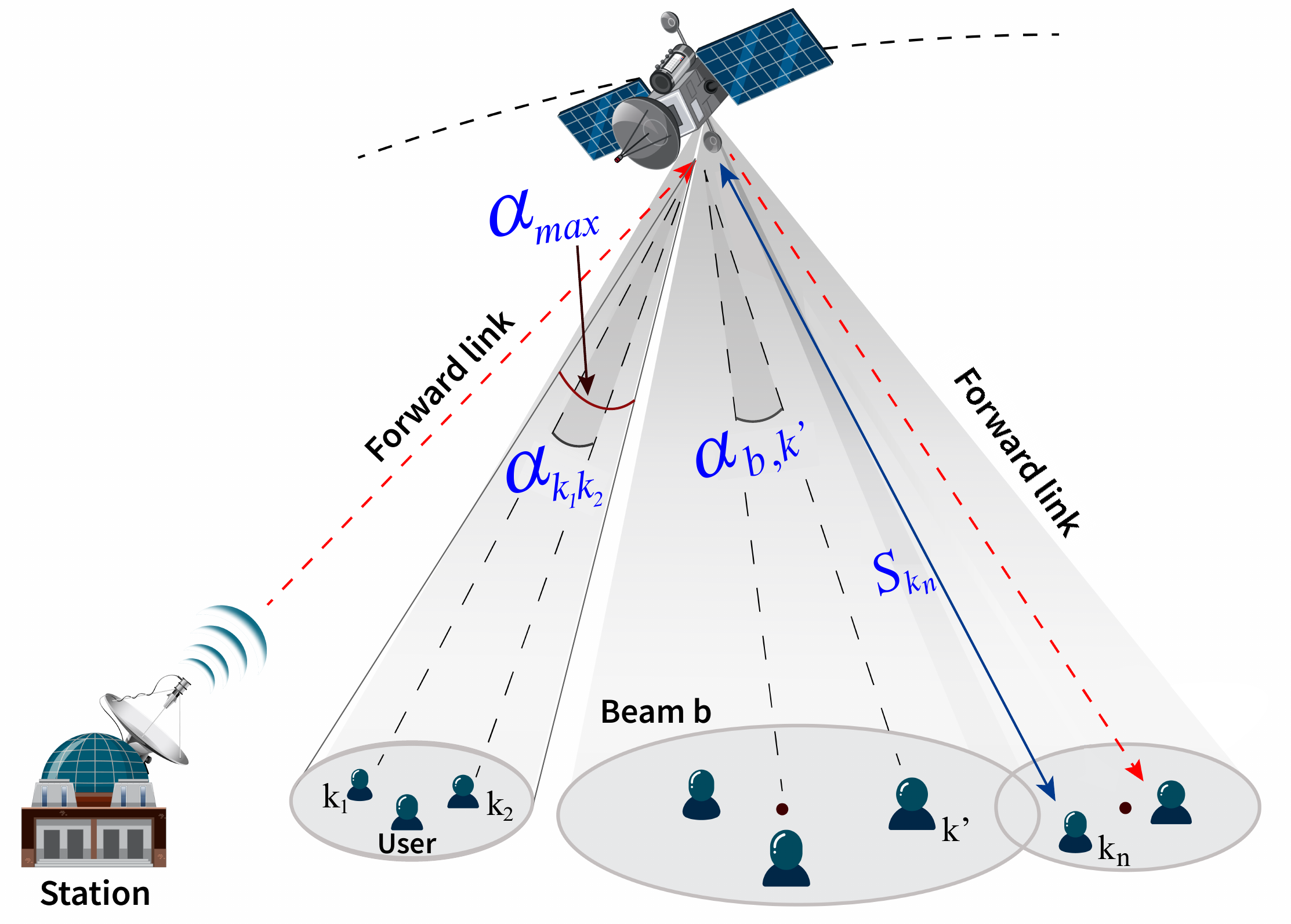}
    \caption{Illustration of the system model, where multiple users can be served by the same beams.}
    \label{fig:system}
\end{figure}
A satellite is equipped with $N$ beams having the circle pattern to serve $K$ users, where multiple users can be served by one beam. Due to energy limitations and mutual interference, it might not be efficient to activate all the beams. Let $B$ and $\mathcal{B} = \{1, 2, \ldots B\}$ be the number and the set of active beams, $\mathcal{U} = \{k_1, k_2, \ldots, k_K\}$ be the set of all users, and $\mathcal{U}_b \subseteq \mathcal{U}$ be a subset of users served by the $b$-th active beam. Then the beam placement problem is defined as assigning the set of users $\mathcal{U}$ to a subset $\mathcal{B}$ such that the NABs are minimized, where every user must be served by exactly one beam. The practical constraints are expressed as
\begin{align}
\bigcup\nolimits_{b\in\mathcal{B}}\mathcal{U}_b = \mathcal{U}; \mathcal{U}_b \neq \emptyset, \mathcal{U}_b \cap \mathcal{U}_{b'} = \emptyset, \forall b, b' \in \mathcal{B}, b \neq b'.
\end{align}
We assume that a user is served by a given beam if and only if this user is located in at least the HPBW of the beam. Let $\alpha_{b,k} \in [0, \pi/2]$ be the angle between user $k$  and the centre of beam $b$ (see Fig.~\ref{fig:system}), and $g_{bk}$ be the radio pattern between beam $b$ and user $k$. Also, $g_{\text{max}}$ is the maximum gain obtained when user $k$ is located at the beam centre. Then, user $k$ can be served by beam $b$ if and only if $g_{bk} \geq g_{\text{max}}/2$. According to the 3GPP report \cite{3gpp2019study},
$g_{bk} = g_{\text{max}}G(\alpha_{b,k}), \forall b \in \mathcal{B}, k \in \mathcal{U},$
where $G(\alpha_{b,k})$ is the normalized pattern gain, which is
\begin{equation}
    G(\alpha_{b,k}) = 
    \begin{cases}
        1, & \text{if $\alpha_{b,k} = 0$}, \\
        4 \bigg|\frac{J_1\left(\frac{2\pi}{\lambda}r\sin{(\alpha_{b,k})}\right)}{\frac{2\pi}{\lambda}r\sin{(\alpha_{b,k}})}\bigg|^2, & \text{if $0 < \alpha_{b,k} \leq \frac{\pi}{2}$},
    \end{cases}
\end{equation}
where $J_1(\cdot)$ is the Bessel function of the order one, $\lambda$ is the carrier wavelength, and $r$ is the radius of the antenna aperture. 
Let $\epsilon_{bk}$ be the antenna efficiency and $S_{k}$ be the slant range distance between user $k$ and the satellite. $D$ is the satellite antenna diameter. The statistical channel gain (SCG) at an arbitrary user is 
$G_{bk} = g_{bk}\bar{g}_{bk}/(L_{bk}L_{atm}),$
where $\bar{g}_{bk} = \epsilon_{bk}\pi^2D^2/\lambda^2$ is the received antenna power gain, $L_{bk} = 16\pi^2S_{k}^2/\lambda^2$ is the free path loss between user $k$ and the satellite, and $L_{atm}$ is the atmospheric absorption. The  statistical channel gain-to-noise ratio (SCGNR) at an user $k$ is denoted as
  $\mathrm{SCGNR}_k = G_{bk} / \sigma^2,$
where $\sigma^2$ is the noise power. 


\subsection{Problem Formulation}
Let's define decision variables $x_b = \{0, 1, \forall b \in [1, N]\}$ where $x_b=1$ as  beam $b$ is active. Otherwise, this beam is inactive. Binary variables $y_{bk} = \{0, 1, \forall b\in [1, N], k\in\mathcal{U}\}$ are introduced to assign users to the beams. Specifically, $y_{bk} = 1$ if user $k$ is assigned to beam $b$, and $y_{bk} = 0$ otherwise. The beam placement problem now becomes finding $\mathbf{x} = \{x_b, \forall b \in [1, N]\}$ and $\mathbf{y} = \{y_{bk}, \forall b \in [1, N], k \in \mathcal{U}\}$ such that the NABs is minimized. The problem is formulated as
\begin{subequations}
\begin{align}
    \underset{\mathbf{x}, \mathbf{y}}{\text{minimize}}
     \quad& \sum\nolimits_{b=1}^Nx_b \label{prob:ILP}\tag{\sf P}\\
    \text{subject to} \quad & \sum\nolimits_{b=1}^Ny_{bk} = 1, \forall k\in \mathcal{U}, \label{ctr:sum_y}\\
    & x_b \leq \sum\nolimits_{k=1}^Ky_{bk}, \forall b \in [1, N], \label{ctr:active_beam} \\
    & M(\beta_{bk} - 1) \leq g_{bk} - g_{\text{max}} / 2 \leq M\beta_{bk}, \nonumber \\
    & \quad\quad\quad\quad\quad\quad\quad\quad \forall b \in [1, N], k\in \mathcal{U},\label{ctr:beta1} \\
    & y_{bk} \leq \beta_{bk}, \forall b \in [1, N], k\in \mathcal{U},\label{ctr:half_power}\\
     & x_b, {y}_{bk}, {\beta}_{bk} \in \{0, 1\}, \forall b \in [1, N], k\in \mathcal{U},\label{ctr:binary}
\end{align}
\end{subequations}
where $M$ is a large constant and $\beta_{bk}$ are auxiliary variables to represent whether the HPBW condition is satisfied, \textit{i.e.,} $\beta_{bk} = 1$ implies that $g_{bk} \geq g_{\text{max}} / 2$ and user $k$ can be served by the $b^{th}$ beam, and $\beta_{bk} = 0$ otherwise. The objective of problem \ref{prob:ILP} is to minimize the NABs. Constraint \eqref{ctr:sum_y} is to ensure that all the users are served and each of those is served by exactly one beam. Constraint \eqref{ctr:active_beam} guarantees that no beam is activated without serving any users. Constraint \eqref{ctr:beta1} is to linearize the HPBW condition. Constraint \eqref{ctr:half_power} ensures that all users are within the coverage regions of the assigned beams.

\subsection{Connection (\ref{prob:ILP}) with the Minimum Clique Cover Problem}
\label{subsec:min_clique}
In this section, we connect our optimization problem to the well-known graph theory problem, Minimum Clique Cover, which involves partitioning the vertex set into the smallest number of disjoint cliques. Let's define an undirected graph $\mathbf{G} = (V, E)$, where $V = \mathcal{U}$ is a set of vertices and $E=\{(k_i, k_u), k_i, k_u \in V\}$ is a set of edges. Each vertex $k \in V$ corresponds to user $k$ in $\mathcal{U}$, and each edge $e = (k_i, k_u)$ indicates that two users $k_i$ and $k_u$ can be served by a single beam. We consider the HPBW condition to determine whether two users can be served by the same beam. Let $\alpha_{k_ik_u}$ be the angle from the satellite to user $k_i$ and $k_u$ ($k_i,k_u \in \mathcal{U}$), calculated as $\alpha_{k_ik_u} = \arccos({2S_{k_i}S_{k_u}}/({S_{k_i}^2 +  S_{k_u}^2 -d_{k_i,k_u}^2})).$ Where, $d_{k_i, k_u}$ is the geography distance, computed through the Spherical Law of Cosines. We note that $\alpha_{\max}$ is the HPBW of beam $b$, $\forall b \in \mathcal{B}$, i.e., the normalized pattern gain $G(\alpha_{\max}/2) = 1/2$. Then two users $k_i$ and $k_u$ can be served by the same beam $b$ if and only if $\alpha_{k_ik_u} \leq \alpha_{\max}$ (see Fig.~\ref{fig:system}). However, \cite{pachler2021static} showed that some users cannot be covered if using an HPBW of $\alpha_{\max}$. Thus, we use $\alpha_{\max}/2$ to ensure that all users in the same beam are within the HPBW area.

Then the graph $\mathbf{G}$ can be represented by an adjacency matrix $\mathbf{A}$ defined as 
$\mathbf{a}_{k_ik_u} =  1, \text{ if } \alpha_{k_ik_u} \leq {\alpha_{\max}/2}$, and $\mathbf{a}_{k_ik_u} = 0,\text{ otherwise.}$ Therefore, to ensure that all users belong to a single beam \( b_t \), the following condition must be satisfied:
$\mathbf{a}_{k_ik_u} =  1, \forall k_i,k_u \in b_t$. This means that each pair of vertices corresponding to users who belong to the same beam is connected. Thus, a subset of $V$ corresponding to the subset of users that satisfies Problem~\eqref{prob:ILP} constitutes a clique, i.e., a complete sub-graph of $\mathbf{G}$. Our problem is now equivalent to partitioning the vertex set $V$ into $B$ disjoint subsets $V_1, V_2, \ldots V_B$, where $B$ is minimized and the sub-graph induced by $V_b$ is a complete graph, $\forall 1\leq b \leq B$. This problem is known as the Minimum Clique Cover (MCC) problem, which is NP-hard \cite{karp2010reducibility}. For a network of $K$ users with a non-limited NABs, it is also NP-hard to approximate Problem \eqref{prob:ILP} within an approximation ratio of $K^{1 - \epsilon}$ for any given $\epsilon > 0$.

\section{Beam Placement with Bisection K-means}

This section proposes an algorithm that combines K-means clustering and binary search, namely Bisection K-means (BK-Means) as shown in Algorithm~\ref{pseudo:bkm} to tackle Problem~\eqref{prob:ILP}. The primary objective of the clustering algorithm is to divide a given dataset into $M$ groups, where the value of $M$ is predetermined. Additionally, we use Bisection algorithm to explore different scenarios of the necessary NABs. Given that $(\phi_{k_i}$,$\theta_{k_i})$ is longitude and latitude of user $k_i$ and $R$ is the radius of the Earth. We utilize Cartesian coordinate format denoted by $(x_{k_i},y_{k_i},z_{k_i})$ as:
$x_{k_i} = R \cos(\phi_{k_i}) \cos(\theta_{k_i}), y_{k_i} = R \cos(\phi_{k_i}) \sin(\theta_{k_i}),  z_{k_i} = R \sin(\phi_{k_i}), \forall i \in [1,K]$ to recalculate the distance between $k_i$ and $k_u$ ($\forall k_i,k_u \in \mathcal{U}$) which is employed in K-means clustering as $d_{k_i,k_u}=\sqrt{(x_{k_i}-x_{k_u})^2 + (y_{k_i}-y_{k_u})^2 + (z_{k_i}-z_{k_u})^2 }$. The algorithm involves three main steps: $i)$ determining the number of clusters 
$B$ for testing; $ii)$ performing K-means clustering with $B$ clusters repeatedly until a solution is found or a maximum of $\mu$ iterations is reached; and $iii)$ repeating steps $i)$ and $ii)$ until the termination condition is met. From set of $N$ beams, we divide range of beams into 2 ranges with a mean value $B_{\text{mean}}$, this value is used to test the feasibility of NABs, and then making a decision to resize the range accordingly. The NABs can be in $[1; B_{\text{mean}}]$ only if a solution with $B_{\text{mean}}$ beams is feasible. Otherwise, it falls within the range of $[B_{\text{mean}}+1; N]$. To evaluate the feasibility of $B_{\text{mean}}$ beams, we utilize the adjacency matrix $\mathbf{A}$ of graph $\mathbf{G}$. We denote $\mathcal{S} = \{\mathcal{U}_1, \mathcal{U}_2, \ldots, \mathcal{U}_{B_{\text{mean}}}\}$, $\mathcal{U}_i \in \mathcal{S}$, $i \in \{1,2,\ldots, B_{\text{mean}}\}$ as a solution of problem~\eqref{prob:ILP}, so $\mathrm{feasible}$ can be concluded as
\begin{equation}
\mathrm{feasible} =  
\begin{cases}
    \text{False}, & \text{if } \exists (k_i,k_u) \in \mathcal{U}_i\text{ s.t } \mathbf{a}_{k_ik_u} = 0\\
    \text{True}, & \text{otherwise.}
\end{cases}
\end{equation} 

\textit{Computational complexity:} Time complexity of BK-Means is estimated as $\mathcal{O}(\mu KN\log(N)(NI+K))$, with $N$ maximum number of beams, $K$ users, $I$ iterations of K-means clustering, and $\mu$ maximum number of iterations. 
\begin{algorithm}
\caption{Bisection K-Means clustering based beam placement (BK-Means)}
\label{pseudo:bkm}
\begin{algorithmic}[1]

\State \textbf{Input}: Dataset, graph $\mathbf{G}$, matrix $\mathbf{A}$, number of maximum beams $N$, maximum iterations $\mu$.
\State \textbf{Output}: List of active beams $\mathcal{B}$ and user assignment $\mathcal{S}$.
\State Initialize $B_{\min} = 1$, $B_{\max} = N$;
        \While {$B_{\min} + 1 < B_{\max}$}
            \State Set $B_{\text{mean}} = \lfloor(B_{\min} + B_{\max})/2\rfloor$;
            \State $\mathrm{feasible} = \mathrm{False}$; $\mathrm{count} = 0$;
            \While{not $\mathrm{feasible}$ and $\mathrm{count} < \mu$}
                \State Run K-means with $B_{\text{mean}}$ clusters;
                \If{every cluster forms a clique of $\mathbf{G}$}
                    \State $\mathrm{feasible} = \text{True}$;
                \Else
                    \State $\mathrm{count} = \mathrm{count} + 1$;
                \EndIf
            \EndWhile
            \If{$\mathrm{feasible}$}
                \State $B_{\max} = B_{\text{mean}}$;
            \Else
                \State $B_{\min} = B_{\text{mean}}$;
            \EndIf
        \EndWhile
\State \textbf{Return}: $B$ subsets of users corresponding to $B$ beams;
\end{algorithmic}
\end{algorithm}
\section{Graph Coloring Based Beam Placement}
This section proposes a two-phase algorithm: first, it solves MCC problem to get an initial solution, then iteratively refines this solution for better load balancing among active beams.

\textbf{\textit{Phase 1 (solution initialization)}}: Our approach is transforming the MCC problem into the equivalent Graph Coloring (GC) problem \cite{karp2010reducibility}. The GC problem involves assigning colors to the vertices of the graph in such a way that no two adjacency nodes have the same color, and the number of colors is minimized. Let $\overline{\mathbf{G}} = (V, \overline{E})$ be the complement graph of $\mathbf{G}$. That is, $\overline{\mathbf{G}}$ is constructed on the same vertex set of $\mathbf{G}$ and two vertices of $\overline{\mathbf{G}}$ is connected if and only if they are independent in $\mathbf{G}$. Let $\overline{\mathbf{A}}$ be the adjacency matrix of $\overline{\mathbf{G}}$ given as:   
\begin{equation}
\overline{{a}}_{k_ik_u} =  
\begin{cases}
    1, & \text{if }a_{k_ik_u} = 0, \\
    0, & \text{otherwise.}
\end{cases}
\label{eq:complement_A}
\end{equation}
Finding minimum clique cover of $\mathbf{G}$ is equivalent to finding a proper vertex coloring for $\overline{\mathbf{G}}$ such that the number of colors is minimized. Each subset of vertices in $\overline{\mathbf{G}}$ that shares the same color as an active beam in our problem. To solve this problem, we use a greedy algorithm which starts with a list of vertices of $\overline{\mathbf{G}}$ sorted in descending order of their degrees. The algorithm iteratively assigns the smallest color to a vertex and only uses a new color if all previously used colors are already assigned to its neighbors, ensuring that adjacency vertices have different colors. 

\textbf{\textit{Phase 2 (load balancing improvement)}}: The initial solution is iteratively refined by moving users from high-workload beams to lower-workload beams. The algorithm considers all the beam pairs, checks and re-assigns unbalanced assignments. The process is repeated until no further feasible reassignments are possible. More details of these steps are described in \textit{Phase 2} of Algorithm \ref{pseudo:tgbp} with the convergence as follows. 

\begin{lemma}
    \label{lemma:phase2_convergence}
    \textit{Phase 2} of Algorithm~\ref{pseudo:tgbp} converges after $BK$ movements, with the NABs ($B$) and the number of users ($K$).
\end{lemma}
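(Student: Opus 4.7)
The plan is to prove convergence by a standard potential function argument for local-search / rebalancing procedures. I would let $n_b = |\mathcal{U}_b|$ denote the load of active beam $b$, so that $\sum_{b=1}^B n_b = K$ throughout Phase~2. Reading the Phase~2 rule, a movement takes a user from a beam $b_1$ with larger load and reassigns it to a beam $b_2$ with strictly smaller load (only when the HPBW/clique constraint inherited from Section~\ref{subsec:min_clique} is preserved). The task is then to exhibit a non-negative, integer-valued quantity $\Phi$ attached to the current assignment such that (i) every executed movement strictly decreases $\Phi$ by at least one unit, and (ii) the range of $\Phi$ is at most $BK$.

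First, I would take the natural candidate $\Phi = \sum_{b=1}^B n_b^2$ and compute the one-step change when a user moves from $b_1$ to $b_2$ with $n_{b_1} > n_{b_2}$:
\[
\Delta\Phi = (n_{b_1}-1)^2 + (n_{b_2}+1)^2 - n_{b_1}^2 - n_{b_2}^2 = 2(n_{b_2}-n_{b_1}) + 2 \leq 0,
\]
with strict decrease whenever the load gap is at least two. Since Phase~2 only fires on a strictly unbalanced pair, this yields monotone decrease; together with $\Phi \geq 0$ it already forces finite termination.

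The main obstacle is sharpening the bound to exactly $BK$. For that I would switch to a per-user bookkeeping argument rather than leaning solely on $\Phi$. Assign to each user $k$ a counter $c_k$ that increments when $k$ is moved, and note that, because every executed movement sends $k$ into a beam of strictly smaller load than its current beam, the sequence of beams $k$ visits cannot repeat as long as the global configuration keeps improving under $\Phi$ (any repetition would reconstruct an earlier configuration, contradicting strict monotonicity of $\Phi$). Hence $c_k \leq B$ for every user, since there are only $B$ active beams into which $k$ can be placed. Summing over the $K$ users yields $\sum_k c_k \leq BK$, which is exactly the total number of movements.

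Finally, I would close the argument by observing that the stopping condition of Phase~2 (``no further feasible, unbalanced reassignment exists'') must be triggered in at most $BK$ iterations, because at each iteration either the movement counter is incremented (and it is bounded by $BK$) or the algorithm halts. The two subtleties to verify carefully are (a) that the HPBW feasibility filter cannot induce a cycle of movements, which is ruled out by the strict monotonicity of $\Phi$, and (b) that the Phase~2 rule never executes a neutral move with $n_{b_1} = n_{b_2}+1$, so each counted movement genuinely contributes to the decrease of $\Phi$; both follow directly from the description of Phase~2 in Algorithm~\ref{pseudo:tgbp}.
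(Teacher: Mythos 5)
Your first potential, $\Phi = \sum_b n_b^2$, does establish that Phase~2 terminates, but it cannot deliver the stated bound on its own: each executed move decreases it by at least $2$, while its initial value can be of order $K^2$, so this argument alone only bounds the number of movements by $O(K^2)$, which exceeds $BK$ whenever $K$ is large relative to $B$. You recognize this, so the whole weight of the proof rests on the second, per-user argument --- and that is where the gap is. You claim each user $k$ is moved at most $B$ times because it never revisits a beam, and you justify this by saying a revisit ``would reconstruct an earlier configuration, contradicting strict monotonicity of $\Phi$.'' That inference is invalid: the global configuration records the positions of all $K$ users, so user $k$ returning to a beam it previously occupied does not recreate any earlier configuration if other users have moved in the meantime. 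Concretely, $k$ can leave beam $b$ while $b$ is overloaded, beam $b$ can subsequently be drained by other movements, and $k$ can later be pushed back into $b$ from some newly overloaded beam; nothing in the monotone decrease of $\sum_b n_b^2$ forbids this. The bound $c_k \le B$ is therefore not established, and the total of $BK$ movements does not follow.

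The direct fix is to choose a potential that is itself bounded by $BK$ and drops by at least one per move, which is what the paper does: with the loads sorted as $u_1 \ge u_2 \ge \cdots \ge u_B$, it takes $\Phi = \sum_{b=1}^{B}\sum_{j=b+1}^{B}(u_b - u_j)$, bounds it by $\Phi \le B\sum_{b=1}^{B} u_b = BK$, and uses that a move across a pair whose gap exceeds one decreases this aggregate pairwise gap by at least one. If you want to keep your two-stage structure, you would need a genuine proof that no user is moved more than $B$ times, which your current argument does not supply; switching to the pairwise-gap potential is the shorter route.
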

\begin{proof}
    Consider $\mathcal{S} = \{\mathcal{U}_1, \mathcal{U}_2, \ldots, \mathcal{U}_B\}$ being an result of \textit{Phase 1}. Let $u_b = |\mathcal{U}_b|, \forall \mathcal{U}_b \in \mathcal{S}$. Assuming that the set $S$ is sorted in the descending order of its elements' sizes, i.e., $u_1 \geq u_2 \geq \ldots \geq u_B$. Let define $\Phi_b = \sum_{j = b+1}^{B}(u_b - u_j)$ be the balancing indicator of beam $b$ and $\Phi = \sum_{b=1}^{B}\Phi_b$ be the balancing indicator of the whole system. We have
    \begin{equation}
    \begin{split}
        \Phi &= \sum\nolimits_{b=1}^{B}\sum\nolimits_{j = b+1}^{B}(u_b - u_j) = \sum\nolimits_{b=1}^{B}\left((B - b)u_b \right. \\ & \left. - \sum\nolimits_{j = b+1}^{B}u_j\right) \leq B\sum\nolimits_{b=1}^{B}u_b = B\sum\nolimits_{b=1}^{B}|\mathcal{U}_b| = BK.
    \end{split}
    \end{equation}
    Since we only move a user from group $\mathcal{U}_b$ to group $\mathcal{U}_{b'}$ if $|\mathcal{U}_b| - |\mathcal{U}_{b'}| > 1$, the balancing indicator $\Phi$ decreases by at least one after each movement. Consequently, Algorithm \ref{pseudo:tgbp} finishes within a maximum of $BK$ steps.
\end{proof}
\begin{algorithm}
    
    \caption{Two-phase Greedy Beam Placement (TGBP)}
    \label{pseudo:tgbp}
    \begin{algorithmic}[1]
        \State \textbf{Input}: List of users $\mathcal{U}$, graph $\overline{\mathbf{G}}$ and matrix $\overline{\textbf{A}}$.
        \State \textbf{Output}: List of active beams $\mathcal{B}$ and user assignment $\mathcal{S}$.
        \% \textit{Phase 1: Solution initialization}
        \State Sort all users $\mathcal{U}$ in descending order of degrees in $\overline{\mathbf{G}}$.
        \State Initialize $\mathcal{B} = \emptyset$, index $b = 0$, user assignment $\mathcal{S} = \emptyset$;
            \ForAll{user $k$ in the sorted list}
                \If{$k$ is not covered by any beam}
                    \State $\mathcal{U}_b = \{k\}$; Mark user $k$ as covered;
                    \ForAll{user $u$ after $k$ in the sorted list}
                        \If{$u$ is not covered and $\overline{\mathbf{a}}_{uu'} = 0~\forall u' \in \mathcal{U}_b$}
                            \State $\mathcal{U}_b = \mathcal{U}_b \bigcup \{u\}$; Mark user $u$ as covered;
                        \EndIf
                    \EndFor
                    \State $\mathcal{S} = \mathcal{S} \bigcup \{\mathcal{U}_b$\}; $b = b + 1$;
                \EndIf
            \EndFor
        \State \textbf{Return}: $\mathcal{B}$, $\mathcal{S}$;\\
        \% \textit{Phase 2: Load balancing improvement}
        \State $\mathrm{stop}$ = False;
        \While{not $\mathrm{stop}$}
            \State$\mathrm{count}\_\mathrm{moves} = 0$;
            \ForAll{pair of user groups $\mathcal{U}_b, \mathcal{U}_{b'} \in \mathcal{S}$}
                \If{the load balancing gap $|\mathcal{U}_b| - |\mathcal{U}_{b'}| > 1$}
                    \ForAll{user $k$ in $\mathcal{U}_b$}
                        \If{$\mathbf{a}_{kk'} = 1~\forall k' \in \mathcal{U}_{b'}$}
                            \State Move user $k$ from $\mathcal{U}_b$ to $\mathcal{U}_{b'}$;
                            \State $\mathrm{count}\_\mathrm{moves} = \mathrm{count}\_\mathrm{moves} + 1$;

                            \If{$|\mathcal{U}_b| - |\mathcal{U}_{b'}| \leq 1$}
                                \State break;
                            \EndIf
                        \EndIf
                    \EndFor
                \EndIf
            \EndFor
            \If{$\mathrm{count}\_\mathrm{moves} = 0$}
                \State $\mathrm{stop}$ = True;
            \EndIf
            \EndWhile
    \State \textbf{Return}: $\mathcal{B}$, $\mathcal{S}$;
    \end{algorithmic}
\end{algorithm}
\textit{Computational complexity:} The complexity of Algorithm~\ref{pseudo:tgbp} is $\mathcal{O}(N^3K^2)$ with $K$ users and $N$ maximum number of beams. Overall, we provide the performance guarantee for the solutions delivered by Algorithm \ref{pseudo:tgbp} as in Lemma~\ref{lemmav1}. 
\begin{lemma} \label{lemmav1}
    The NABs ($B$) returned by Algorithm \ref{pseudo:tgbp} is bounded by $B \leq K + 1 - \delta(\mathbf{G})$, where $\delta(\mathbf{G}) = \min\left\{\deg_{\mathbf{G}}(v) : v\in V\right\}$ is the minimum degree over all vertices of the graph $\mathbf{G}$.
\end{lemma}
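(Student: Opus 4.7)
The plan is to analyze Phase 1 vertex by vertex and bound the index $b_v$ at which any vertex $v$ initiates a new beam in terms of the non-neighborhood of $v$ in $\mathbf{G}$. Minimizing the degree over beam initiators then yields the stated estimate.

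First I would observe that the descending sort on degrees in $\overline{\mathbf{G}}$ is the same as an ascending sort on degrees in $\mathbf{G}$, and that Phase 2 only reassigns users among already-created beams without altering the count $B$, so it suffices to analyze the beams produced by Phase 1.

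The crucial claim is: if a vertex $v$ is uncovered when the outer loop reaches it and therefore initiates the $b_v$-th beam $\mathcal{U}_{b_v}$, then $b_v - 1 \leq K - 1 - \deg_{\mathbf{G}}(v)$. Indeed, each of the $b_v - 1$ previously created beams $\mathcal{U}_{b'}$ was started by a vertex strictly before $v$ in the sorted list, so the inner loop that constructed $\mathcal{U}_{b'}$ examined $v$ at some moment. Because $v$ is still uncovered, it must have been rejected by every such inner loop, which forces the existence, at the moment of rejection, of some $u' \in \mathcal{U}_{b'}$ with $\overline{\mathbf{a}}_{vu'} = 1$, i.e., a non-neighbor of $v$ in $\mathbf{G}$. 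Since Phase 1 only ever adds vertices to a beam, this witness $u'$ remains in the final $\mathcal{U}_{b'}$, and the disjointness of the beams guarantees that the $b_v - 1$ witness vertices obtained this way are distinct non-neighbors of $v$. As $v$ has at most $K - 1 - \deg_{\mathbf{G}}(v)$ non-neighbors in $\mathbf{G}$, the claim follows.

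Rearranging, $b_v \leq K - \deg_{\mathbf{G}}(v) \leq K - \delta(\mathbf{G})$ for every beam-initiating vertex; taking the maximum over such vertices gives $B \leq K - \delta(\mathbf{G}) \leq K + 1 - \delta(\mathbf{G})$, which is the stated bound. The main obstacle is the bookkeeping in the crucial claim: one must carefully argue (i) that $v$ was genuinely scanned during the formation of every earlier beam, which needs the sorted-order observation, (ii) that the blocking witness $u'$ found at the moment of rejection survives into the final version of $\mathcal{U}_{b'}$, which relies on the monotone growth of beams in Phase 1, and (iii) that distinct earlier beams produce distinct witnesses, which is immediate from disjointness. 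With these three observations recorded, the counting above closes the proof.
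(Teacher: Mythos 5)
Your proof is correct, and it takes a more self-contained route than the paper's. The paper reduces the claim to the classical greedy-colouring bound $|\mathcal{B}| \le \Delta(\overline{\mathbf{G}}) + 1$, cited as a known result, and then converts maximum degree in $\overline{\mathbf{G}}$ into minimum degree in $\mathbf{G}$; you instead re-derive that bound from first principles by a witness-counting argument run directly in $\mathbf{G}$: every beam created before the one initiated by $v$ must contain a distinct non-neighbour of $v$ in $\mathbf{G}$, and $v$ has at most $K - 1 - \deg_{\mathbf{G}}(v)$ such non-neighbours. Both arguments share the same skeleton (Phase~2 preserves the number of groups, so only the greedy Phase~1 matters), but yours avoids any citation and, as a bonus, yields the slightly sharper bound $B \le K - \delta(\mathbf{G})$; the paper arrives at the weaker $K + 1 - \delta(\mathbf{G})$ because it uses the identity $\deg_{\overline{\mathbf{G}}}(v) = |V| - \deg_{\mathbf{G}}(v)$, whereas the correct complement identity is $\deg_{\overline{\mathbf{G}}}(v) = |V| - 1 - \deg_{\mathbf{G}}(v)$. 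Your three bookkeeping points are all sound for this algorithm: earlier initiators precede $v$ in the sorted order, so each earlier inner loop does scan the still-uncovered $v$ and must reject it on the adjacency test; beams only grow during Phase~1, so the blocking witness survives to the final $\mathcal{U}_{b'}$; and disjointness of the beams makes the witnesses distinct.
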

\begin{proof}
    Since the number of user groups keeps unchanged in \textit{Phase 2}, the NABs returned by Algorithm~\ref{pseudo:tgbp} only depends on \textit{Phase 1}. Let $\Delta(\overline{\mathbf{G}}) = \max\left\{\deg_{\overline{\mathbf{G}}}(v) : v\in V\right\}$ be the maximum degree over all vertices. Following \cite{arumugam2016handbook}, the list of active beams returned by \textit{Phase 1} satisfying that $|\mathcal{B}|~\leq~\Delta(\overline{\mathbf{G}})~+~1$. Moreover, since $\overline{\mathbf{G}}$ is the complement graph of $\mathbf{G}$, we have $\deg_{\overline{\mathbf{G}}}(v) = |V| - \deg_{\mathbf{G}}(v), \forall v\in V$. In summary, we have: 
    \begin{align}
        |\mathcal{B}| &\leq \max\left\{\deg_{\overline{\mathbf{G}}}(v) : v\in V\right\} + 1 \nonumber\\
        &= \max\left\{|V| - \deg_{\mathbf{G}}(v) : v\in V\right\} + 1 \nonumber\\
        &=|V| + 1 + \max\left\{- \deg_{\mathbf{G}}(v) : v\in V\right\} \nonumber\\
        &=K + 1 - \min\left\{\deg_{\mathbf{G}}(v) : v\in V\right\}, 
    \end{align}
    which completes the proof.
\end{proof}

\section{Numerical Results}
We investigate a LEO satellite system providing simultaneous service to multiple users, which are located in a longitude range of $[30^\circ, 40^\circ]$ and a latitude range of $[-120^\circ, -110^\circ]$. The satellite is positioned at coordinates with the latitude and longitude of $[0^\circ, -88.7^\circ]$, and the satellite antenna  has a diameter of 0.6 [m]. Each beam center of the satellite has a maximum gain of 50 [dBi] and a HPBW with a maximum angle of $\alpha_{\text{max}} = 3.2^\circ$. The carrier frequency is 18.05 [GHz], and the aperture radius is $5\lambda$. The maximum iteration is $\mu = 200$, and the maximum iteration number of K-means in Algorithm~\ref{pseudo:bkm} (line 6) is $I = 500$. All the benchmarks  are implemented in Matlab with the same configurations: Windows 11 Pro operating system, Intel Core i$7-12700$ processor, $2.10$GHz CPU, and $64$GB RAM.
We conducted experiments to compare the performance of four algorithms: a heuristic method based on graph theory in \cite{pachler2021static}, referred to as HCBP; K-means with 2 phases (TPK-Means) in \cite{bui2022}, BK-Means, and TGBP. We note that TPK-Means and HCBP from previous studies have a significant limitation due to their high computational time, making them unsuitable for large user groups. Therefore, to compare the performance of our two algorithms with TPK-Means and HCBP, we consider two cases: small user sets (10-30 users) and large user sets (50-1000 users), with TPK-Means and HCBP only applicable to small user sets.

Fig.~\ref{fig:fig2}(a) and Fig.~\ref{fig:fig2}(b) show the average NABs required for power distribution using HCBP, TPK-Means, BK-Means, and TGBP. In small user sets (Fig.~\ref{fig:fig2}(a)), the NABs are different across algorithms, TGBP is more efficient overall. For larger user sets (Fig.~\ref{fig:fig2}(b)), TGBP outperforms BK-Means as the user count increases. This is due to TGBP’s Greedy Algorithm, which finds the minimum NABs sequentially, unlike the iterative K-Means-based approaches of BK-Means and TPK-Means. HCBP's incremental algorithm, which selects random cliques by size, also struggles to guarantee optimal solutions.

\begin{figure*}[t]
	\centering
    \begin{minipage}[t]{0.245\textwidth}
        \includegraphics[trim=0.25cm 0.1cm 0.5cm 0.7cm, clip=true, scale=0.33]{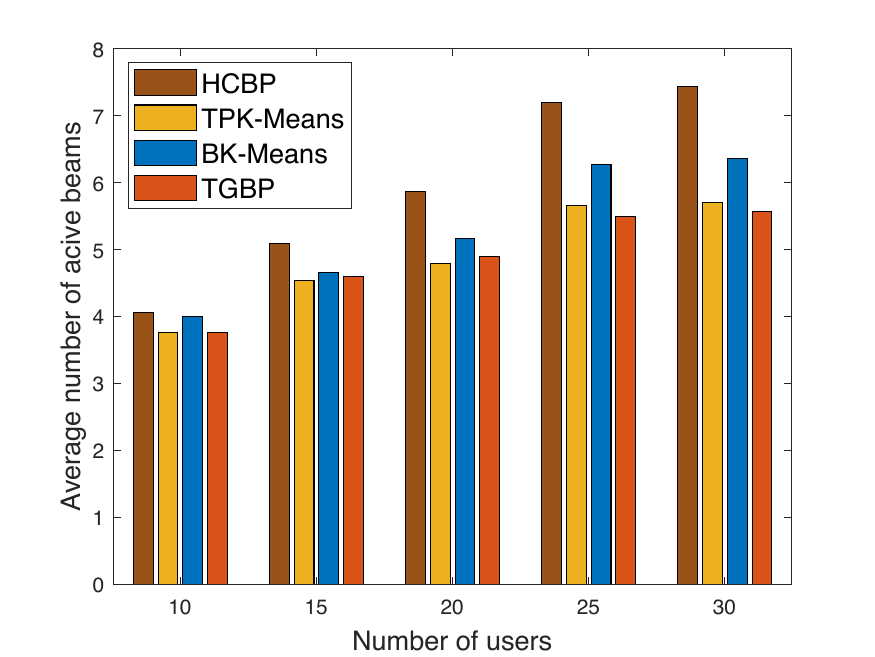}  
        \subcaption{}  	
        \label{fig:number_of_beams_small}
    \end{minipage}
    \begin{minipage}[t]{0.245\textwidth}
	\includegraphics[trim=0.25cm 0.12cm 0.5cm 0.7cm, clip=true, scale=0.33]{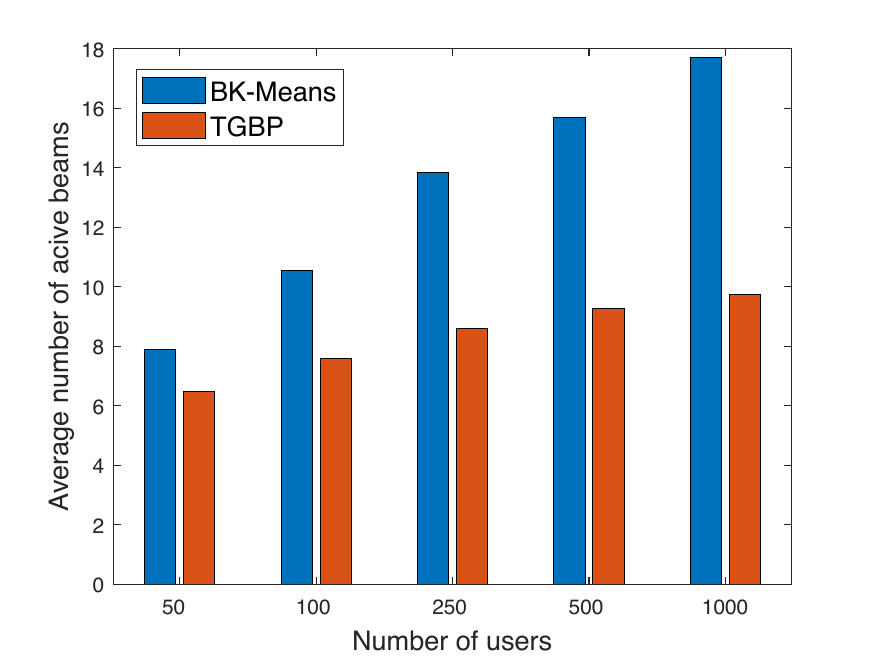}
        \subcaption{}
        \label{fig:number_of_beams_large}
    \end{minipage}
    \begin{minipage}[t]{0.245\textwidth}
	\includegraphics[trim=0.25cm 0.1cm 0.5cm 0.7cm, clip=true, scale=0.33]{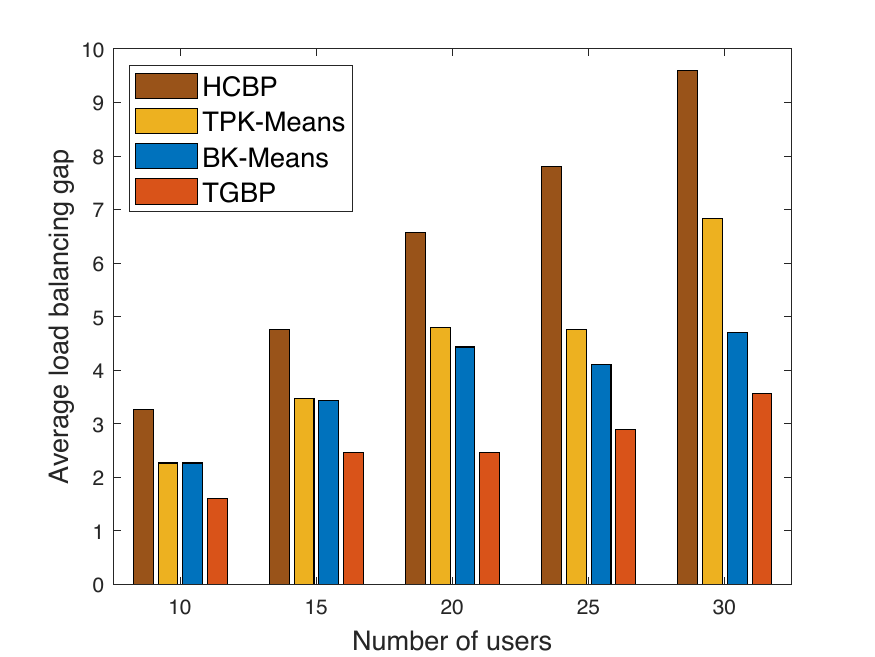}
        \subcaption{}
        \label{fig:balancing_gap_small}
    \end{minipage}
     \begin{minipage}[t]{0.245\textwidth}
	\includegraphics[trim=0.25cm 0.1cm 0.5cm 0.7cm, clip=true, scale=0.33]{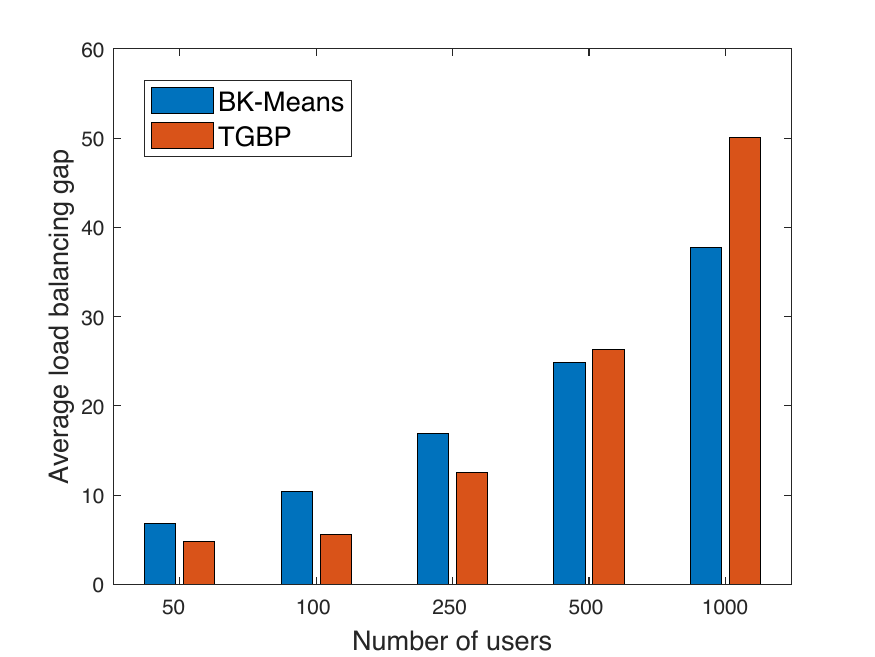}
        \subcaption{}
        \label{fig:balancing_gap_large}
    \end{minipage}
\vspace{-0.2cm}
\caption{(a) Number of active beams on small-scale networks. (b) Number of active beams on large-scale networks. (c) Average load balancing gap on small-scale networks. (d) Average load balancing gap on large-scale networks. }
\label{fig:fig2}
\vspace{-0.3cm}
\end{figure*}

\begin{figure*}[t]
	\centering
    \begin{minipage}[t]{0.245\textwidth}
        \includegraphics[trim=0.25cm 0.1cm 0.5cm 0.7cm, clip=true, scale=0.315]{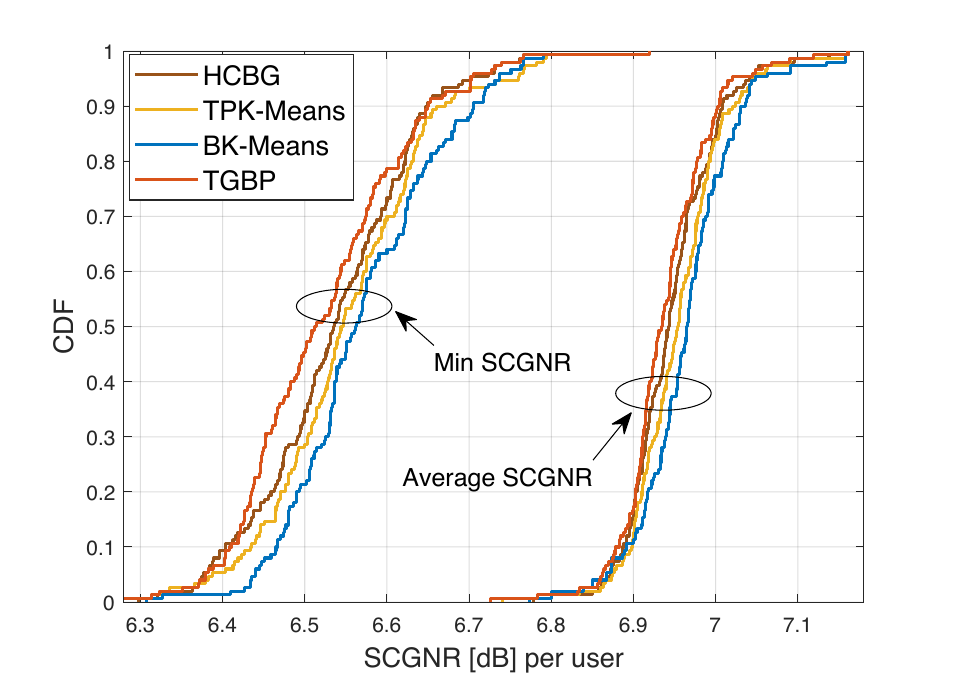}    	
        \subcaption{}
        \label{fig:SNR_small}
    \end{minipage}
    \begin{minipage}[t]{0.245\textwidth}
	\includegraphics[trim=0.25cm 0.0cm 0.5cm 0.7cm, clip=true, scale=0.315]{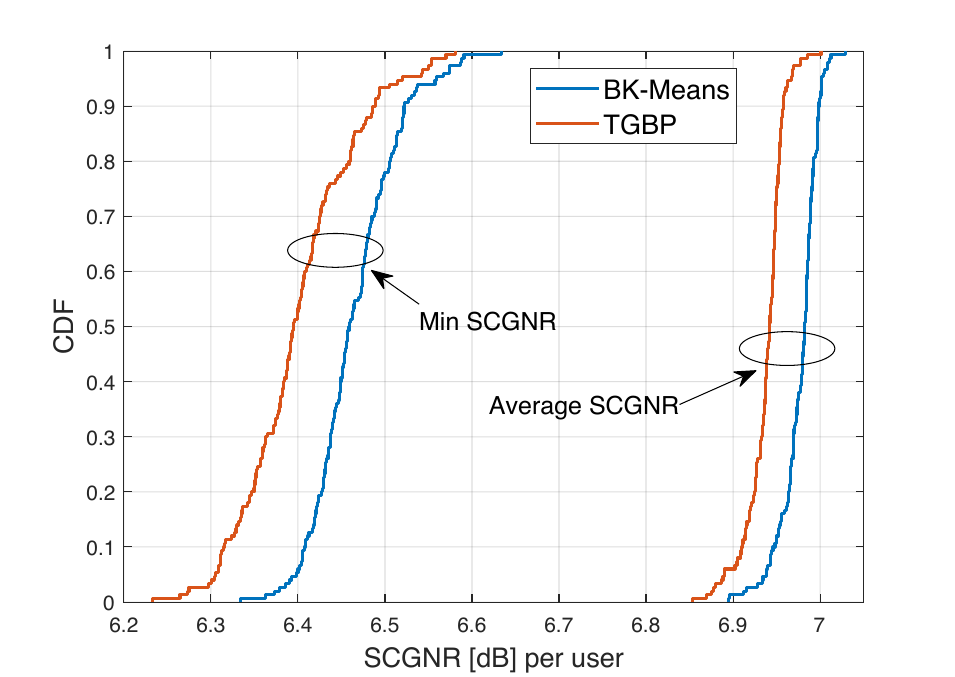}
        \subcaption{}
        \label{fig:SNR_large}
    \end{minipage}
    \begin{minipage}[t]{0.245\textwidth}
	\includegraphics[trim=0.25cm 0.0cm 0.5cm 0.7cm, clip=true, scale=0.325]{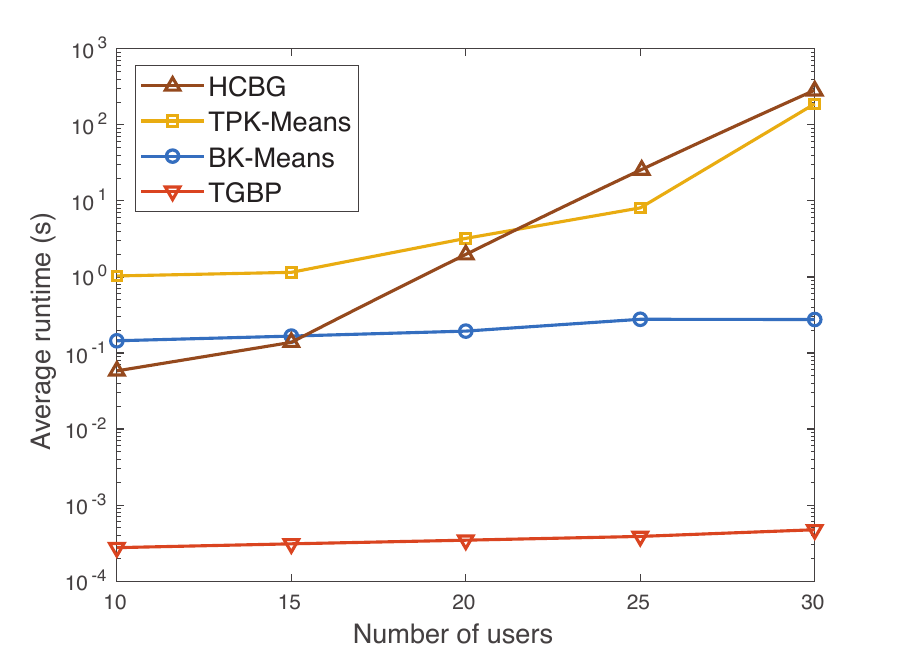}
        \subcaption{}
       \label{fig:computing_time_small}
    \end{minipage}
     \begin{minipage}[t]{0.245\textwidth}
	\includegraphics[trim=0.25cm 0.1cm 0.5cm 0.7cm, clip=true, scale=0.325]{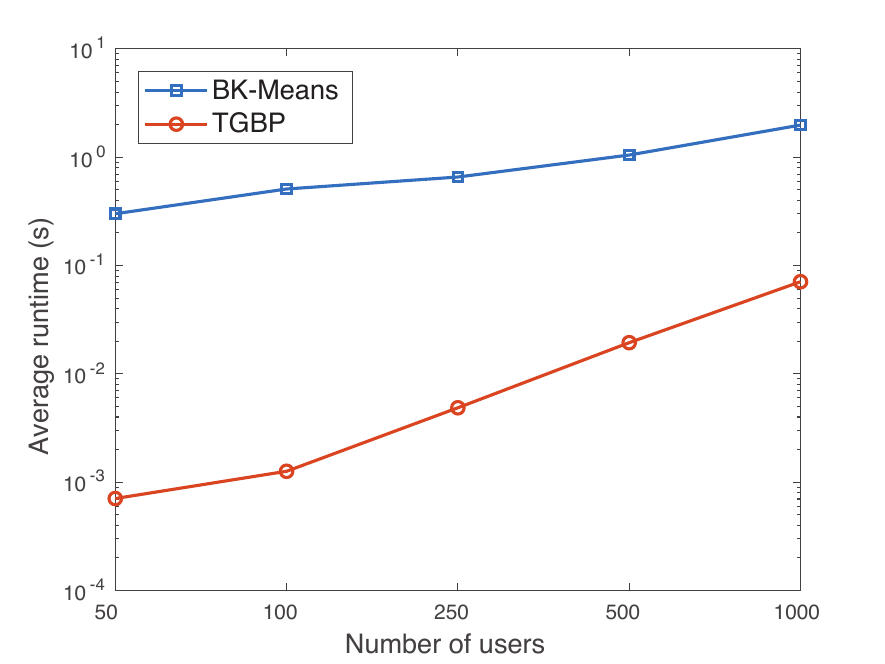}
        \subcaption{}
       \label{fig:computing_time_large}
    \end{minipage}
\vspace{-0.2cm}
\caption{(a) CDF of the average SCGNR and min SCGNR [dB] per user on small-scale networks. (b) CDF of the average SCGNR and min SCGNR [dB] per user on large-scale networks. (c) Computing time required on small-scale networks. (d) Computing time required on large-scale networks. }
\label{fig:fig3}
\vspace{-0.5cm}
\end{figure*}

Fig.~\ref{fig:fig2}(c) and Fig.~\ref{fig:fig2}(d) depict the average load balancing gap, which is the average difference between the largest beam (serving the most users) and the smallest beam (serving the least users) resulting from all algorithms across various user sets. In Fig.~\ref{fig:fig2}(c), there is a noticeable difference among the four algorithms as the user count increases. Specifically, TGBP proves to offer a more balanced distribution compared to BK-Means, TPK-Means and HCBP. This performance is due to the refining phase of TGBP, which enhances balance in smaller user sets through multiple refining steps. However, with larger user sets (500 and 1000 users), BK-Means demonstrates a better average load balancing gap compared to TGBP, as illustrated in Fig.~\ref{fig:fig2}(d). This shows that the balancing efficiency of TGBP decreases as the number of refining steps increases with larger user sets. Conversely, BK-Means benefits from its iterative clustering approach, which better manages larger datasets and maintains a more consistent load balancing gap.

Fig.~\ref{fig:fig3}(a) and Fig.~\ref{fig:fig3}(b) show the cumulative distribution function (CDF) of the statistical channel gain to noise ratio (SCGNR) [dB] for each user. $i.e., 10{\text{ log}}_{10}(G_{b,k}/\sigma_{k}^2), \forall b \in \mathcal{B}, k \in \mathcal{U} $. BK-Means provides benefits in both average and minimum SCGNR compared to the remaining algorithms in small sets and TGBP in large sets. The channel gain from applying TGBP always less than HCBP, TPK-Means and BK-Means. This is because TGBP prioritizes minimizing the number of beams and balancing the load across beams. As a result, each beam in TGBP can typically serve a larger number of users compared to the other algorithms, leading to decreased channel gain per user. BK-Means, with its iterative clustering approach, can better optimize beam placement and power allocation, resulting in higher SCGNR values. TPK-Means and HCBP also benefit from their approaches, but their high computational time makes them less practical for larger user sets despite their potential for achieving higher SCGNR in smaller sets.

Fig.~\ref{fig:fig3}(c) and Fig.~\ref{fig:fig3}(d) illustrate the running time of each algorithm. TGBP algorithm has a shorter computational time compared to the remaining algorithms, i.e, in the order of milliseconds. Despite a rapid increase in computational time with a rising number of users (from 100 to 1000) leading to the large number of refining steps, TGBP consistently outperforms BK-Means in terms of running time. On the other hand, BK-Means exhibits a gradual and stable increase in computational time due to its iterative K-Means clustering process with a predefined number of iterations. In contrast, HCBP and TPK-Means experience an exponential rise in computational time, making them impractical for large datasets. 

\section{Conclusion}
This paper has illustrated the importance of optimizing beam placement to enhance channel gain for users in the coverage area. Furthermore, load balancing among active beams also helps to save resources while improving communication performance. We have proposed two algorithms to improve computational ability and achieve good feasible solutions in polynomial time compared to the state-of-the-art benchmarks. Numerical experiments illustrated the benefits of two our proposed approaches on channel gain and computation time.  


\bibliographystyle{IEEEtran}
\bibliography{IEEEabrv,refs}
\end{document}